\newtheorem{theorem}{Theorem}[section]
\newtheorem{lemma}[theorem]{Lemma}
\newtheorem{corollary}[theorem]{Corollary}
\theoremstyle{definition}
\newtheorem{definition}[theorem]{Definition}
\newtheorem{example}[theorem]{Example}
\newtheorem{remark}[theorem]{Remark}
\newtheorem{question}[theorem]{Question}
\numberwithin{equation}{section}
\renewcommand{\d}{\rm{d} }
\DeclareMathOperator{\e}{E}
\DeclareMathOperator{\ns}{NS}
\DeclareMathOperator{\rs}{RS}
\DeclareMathOperator{\supp}{supp}
\DeclareMathOperator{\eonv}{eonv}
\newcommand{\F}{\mathbb{F}}
\begin{document}
\title{Codes from incidence matrices of hypergraphs}

\author{Sudipta Mallik } 
\affil{\small Department of Mathematics and Statistics, Northern Arizona University, 801 S. Osborne Dr.\\ PO Box: 5717, Flagstaff, AZ 86011, USA  sudipta.mallik@nau.edu}
\author{Bahattin Yildiz } 
\affil{\small Intel Corporation\footnote{© Intel Corporation.  Intel, the Intel logo, and other Intel marks are trademarks of Intel Corporation or its subsidiaries.  Other names and brands may be claimed as the property of others.   }, 2200 Mission College Blvd, Santa Clara, CA 95054, USA bahattin.yildiz@intel.com}

\maketitle
\begin{abstract}
Binary codes are constructed from incidence matrices of hypergraphs. A combinatroial description is given for the minimum distances of such codes via a combinatorial tool called ``eonv". This combinatorial approach provides a faster alternative method of finding the minimum distance, which is known to be a hard problem. This is demonstrated on several classes of codes from hypergraphs. Moreover, self-duality and self-orthogonality conditions are also studied through hypergraphs.  
\end{abstract}

\renewcommand{\thefootnote}{\fnsymbol{footnote}} 
\footnotetext{\emph{2020 Mathematics Subject Classification: 94B05, 94B25\\  
Keywords: Hypergraph, Incidence matrix, Minimum distance, Self-dual code}}
 \renewcommand{\thefootnote}{\arabic{footnote}} 

\section{Introduction}
A {\it hypergraph} $H$ is an ordered pair $H=(V,E)$ consisting of a finite nonempty set $V$, called the {\it vertex set}, and  a finite multiset $E$, called the {\it edge set}, which is a family of nonempty subsets of $V$.  We call $H$ to be {\it simple} if $H$ has  no repeated edges (i.e., $E$ is a set). $H$ is {\it $r$-uniform} if each edge of $H$ has cardinality $r$. Note that simple 2-uniform hypergraphs are just simple graphs. The {\it order} and {\it size} of $H$ are $|V|$ and $|E|$ respectively. Given $V=\{v_1,v_2,\ldots,v_n\}$ and $E=\{e_1,e_2,\ldots,e_m\}$, the {\it incidence matrix} $M=[m_{ij}]$ of $H$ is an $n\times m$ binary matrix such that $m_{ij}=1$ if and only if $v_i\in e_j$. For simplicity, index sets of rows and columns of $M$ are denoted by corresponding sets of vertices and edges of $H$ respectively.\\

Let $\F_2$ be the binary field. A binary linear code $C$ of length $n$ is defined as a subspace of $\F_2^n$. If the dimension of $C$ is $k$, we say $C$ is an $[n,k]$-code. A matrix whose rows form a basis for $C$ is called a {\it generator matrix} for $C$ and is denoted by $G$. The code $C$ is the row space of $G$ and hence can be denoted as $\rs(G)$. 

The support of a vector $\bm x=[x_1,x_2,\ldots,x_n] \in \F_2^n$, denoted by $\supp(\bm x)$, is the set of indices $i$ for which $x_i\neq 0$. The {\it Hamming weight} $w_H(\bm x)$ of a vector $\bm x \in \F_2^n$ is defined as the number of non-zero coordinates in $\bm x$, i.e., $w_H(\bm x)=|\supp(\bm x)|$. The {\it  Hamming distance} between two vectors $\bm x$ and $\bm y$ in $\F_2^n$, denoted by $d_H(\bm x,\bm y)$, is defined as $$d_H(\bm x,\bm y) = w_H(\bm x-\bm y).$$
The {\it minimum distance} of a code $C$, denoted by $d(C)$, is defined to be the minimum distance between distinct codewords in $C$. We write the standard parameters $[n,k,d]$ to describe a code $C$ where $n$ denotes the length of $C$, $k$ its dimension, and $d$ its minimum distance. 

\begin{definition}
Let $C$ be a binary linear code of length $n$. The
{\it dual} of $C$, denoted by $C^{\perp}$, is given by
$$C^{\perp}:= \large\{ \bm y \in \F_2^n \large \;|\; \langle \bm y,\bm x \rangle = 0 \:\: \forall\:
\bm x \in C \large \}.$$
\end{definition}

 Note that, if $C$ is $\rs(G)$, then $C^{\perp}$ is $\ns(G)$, the null space of $G$.

\begin{definition}
A binary linear code $C$ is {\it self-orthogonal} if $C \subseteq C^{\perp}$ and  {\it self-dual} if $C = C^{\perp}.$
\end{definition}

A graph-theoretic approach to codes was given in \cite{Us1} and \cite{Us2}. In \cite{Us1}, codes were generated using the adjacency matrices of simple graphs and a combinatorial parameter called ``von" was introduced to find the minimum distances of such codes. This parameter was used to find a new formula for the minimum distance of an expander code in \cite{SM1}. In \cite{Us2}, we considered codes generated by adjacency matrices of directed graphs and found graph-theoretic conditions on equivalence of such codes as well as their minimum distances. 

In this work, we consider the incidence matrices of hypergraphs as generator matrices for binary linear codes. Note that this covers all binary linear codes as any generator matrix can be considered as the incidence matrix of a hypergraph. Codes from incidence matrices of graphs have been studied in \cite{dankelmann1} and \cite{dankelmann2}. Our work on hypergraphs generalizes these. We also introduce the notion of ``eonv", a combinatorial parameter for hypergraphs that gives us an alternative description for the minimum distance of codes obtained from the incidence matrices of hypergraphs. This combinatorial approach leads to a more simplified calculation for minimum distances of some classes of codes. We also consider conditions on self-orthogonality and self-duality for such codes through the hypergraph structure. The rest of the work is organized as follows. In Section 2, we give the main results on eonv and apply them to special cases of hypergraphs, namely the projective planes and complete 3-partite 3-uniform hypergraphs. We also present a main result about the minimum distances of the class of cyclic codes via the eonv construction. In Section 3, we investigate self-orthogonality and self-duality conditions in terms of the hypergraph properties and posed some open problems.

\section{Main results}
\begin{definition}
Let $H=(V,E)$ be a hypergraph. For a nonempty set $S\subseteq V$, $\eonv(S)$ denotes the set of edges with odd number of vertices in $S$, i.e.,
$$\eonv(S)=\{e\in E \;:\; |e\cap S| \text{ is odd}\}.$$
\end{definition}

\begin{example}
Consider the complete 3-partite 3-uniform hypergraph with vertex set $V=\{x,y_1,y_2,z_1,z_2\}$ and edge set $E=\{\{x,y_1,z_1\},\{x,y_1,z_2\},\{x,y_2,z_1\},\{x,y_2,z_2\}\}$. 

\[\begin{array}{|c|c|}
\hline
    S & \eonv(S) \\
    \hline
    \{y_1\} & \{\{x,y_1,z_1\},\{x,y_1,z_2\}\}\\
    \hline
    \{x\} & E\\
    \hline
    \{y_1,z_1\} & \{\{x,y_1,z_2\},\{x,y_2,z_1\}\}\\
    \hline
    \{x,y_i\} & \varnothing\\
    \hline
    \{x,z_i\} & \varnothing\\
    \hline
\end{array}\]
\end{example}

Note that if $H=(V,E)$ is a $2$-uniform connected hypergraph (i.e., a connected graph), then a nonempty $\eonv(S)$ for some $S\subseteq V$ is an edge-cut of $H$. Now we present the minimum distance of the code generated by the incidence matrix of a hypergraph in terms of eonv:

\begin{theorem}\label{main}
Let $H=(V,E)$ be a hypergraph and $M=[m_{ij}]$ be the $n\times m$ vertex-edge incidence matrix of $H$. Let $C=C(H)$ be the binary linear code $\rs(M)$. Then 
$$\d(C)=\min_{\substack{\varnothing \neq S\subseteq V\\ \eonv(S)\neq \varnothing}} |\eonv(S)|.$$
\end{theorem}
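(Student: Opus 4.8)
\emph{Proof proposal.} The plan is to make the codewords of $C=\rs(M)$ completely explicit as indicator vectors of the sets $\eonv(S)$, and then reduce the minimum distance to a minimum weight over nonzero codewords.

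\textbf{Step 1: Codewords are exactly the vectors $\bm 1_{\eonv(S)}$.} Since over $\F_2$ the only scalars are $0$ and $1$, the row space satisfies $C=\rs(M)=\bigl\{\sum_{v\in S}M_v:\ S\subseteq V\bigr\}$, where $M_v$ denotes the row of $M$ indexed by the vertex $v$. Fix $S\subseteq V$ and look at the coordinate of $\sum_{v\in S}M_v$ indexed by an edge $e_j$: it equals $\sum_{v\in S}m_{v,e_j}=|e_j\cap S|\pmod 2$, which is $1$ precisely when $e_j\in\eonv(S)$. Hence $\sum_{v\in S}M_v=\bm 1_{\eonv(S)}$, the indicator vector of the edge set $\eonv(S)$, so $w_H\!\left(\sum_{v\in S}M_v\right)=|\eonv(S)|$. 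Consequently the codewords of $C$ are exactly the vectors $\bm 1_{\eonv(S)}$ for $S\subseteq V$; note that distinct sets $S$ may yield the same codeword, but this is harmless for what follows.

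\textbf{Step 2: Reduce $\d(C)$ to a minimum weight.} Because $C$ is linear, for distinct $\bm x,\bm y\in C$ we have $d_H(\bm x,\bm y)=w_H(\bm x-\bm y)$ with $\bm x-\bm y\in C\setminus\{\bm 0\}$, so $\d(C)=\min\{w_H(\bm c):\bm c\in C,\ \bm c\neq\bm 0\}$, the standard fact that the minimum distance of a linear code equals its minimum nonzero weight. By Step 1 a codeword $\bm 1_{\eonv(S)}$ is nonzero if and only if $\eonv(S)\neq\varnothing$. Moreover $\eonv(\varnothing)=\varnothing$, so discarding $S=\varnothing$ loses nothing; thus the nonzero codewords are exactly the $\bm 1_{\eonv(S)}$ with $\varnothing\neq S\subseteq V$ and $\eonv(S)\neq\varnothing$. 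Combining this with Step 1,
\[
\d(C)=\min_{\substack{\varnothing\neq S\subseteq V\\ \eonv(S)\neq\varnothing}}|\eonv(S)|,
\]
which is the claimed identity.

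\textbf{Main obstacle.} There is no deep difficulty here: the argument is essentially the observation that summing rows of $M$ over $S$ records the parities $|e_j\cap S|$. The only points needing care are the bookkeeping in Step 1 (the $e_j$-entry of $\sum_{v\in S}M_v$ is the parity of $|e_j\cap S|$, hence an entry of $\bm 1_{\eonv(S)}$) and, in Step 2, matching the index set of the right-hand minimum to the set of nonzero codewords — in particular checking that excluding $S=\varnothing$ is legitimate because it already contributes $\eonv(\varnothing)=\varnothing$ and hence the zero codeword.
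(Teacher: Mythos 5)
Your proposal is correct and follows essentially the same route as the paper: identify each codeword $\sum_{v\in S}M_v$ with the indicator vector of $\eonv(S)$ so that its weight is $|\eonv(S)|$, then use that the minimum distance of a linear code is its minimum nonzero weight, with the nonzero codewords corresponding exactly to the sets $S$ with $\eonv(S)\neq\varnothing$. The only cosmetic difference is that the paper phrases this as two inequalities (a lower bound plus an explicit codeword attaining it), while you state the bijection between codewords and indicator vectors up front; the mathematical content is identical.
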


\begin{proof}
Let $x\in C$ be a nonzero codeword. Then 
$$x=\sum_{v\in V} \mu_v M_v,$$
where $\mu_v\in \mathbb F_2$ and $M_v$ is the row of $M$ corresponding to vertex $v$ of $H$. Without loss of generality, let 
\[x=M_1+M_2+\cdots+M_k,\]
for some positive integer $k$. Note that $i\in \supp(x)$ if and only if $|e_i\cap \{1,2,\ldots,k\}|$ is odd. In other words, $\supp(x)=\eonv(\{1,2,\ldots,k\})$. Thus


$$w_H(x)=|\supp(x)|=|\eonv(\{1,2,\ldots,k\})| \geq \min_{\substack{\varnothing \neq S\subseteq V\\ \eonv(S)\neq \varnothing}} |\eonv(S)|.$$
Therefore
$$\d(C)\geq \min_{\substack{\varnothing \neq S\subseteq V\\ \eonv(S)\neq \varnothing}} |\eonv(S)|.$$

To prove the equality, it suffices to construct a nonzero codeword with weight
$$\min_{\substack{\varnothing \neq S\subseteq V\\ \eonv(S)\neq \varnothing}} |\eonv(S)|.$$
Consider a nonempty set $T\subseteq V$ such that 
$$|\eonv(T)|=\min_{\substack{\varnothing \neq S\subseteq V\\ \eonv(S)\neq \varnothing}} |\eonv(S)|.$$
Suppose 
$$y=\sum_{v\in T} M_v.$$
Since $\supp(y)=\eonv(T)$,
$$w_H(y)=|\supp(y)|=|\eonv(T)|=\min_{\substack{\varnothing \neq S\subseteq V\\ \eonv(S)\neq \varnothing}} |\eonv(S)|.$$
\end{proof}

\begin{corollary}
Let $H=(V,E)$ be a hypergraph and $M=[m_{ij}]$ be the $n\times m$ vertex-edge incidence matrix of $H$. Let $C=C(H)$ be the binary linear code $\rs(M)$. For a codeword $x\in C$ and a set $S\subseteq V$, $x=\sum_{v\in S} M_v$ if and only if $\supp(x)=\eonv(S)$. 
\end{corollary}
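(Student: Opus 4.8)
The plan is to extract both implications from the single index-chasing computation already carried out in the proof of Theorem~\ref{main}. For the forward direction, suppose $x=\sum_{v\in S}M_v$. Then for each edge $e_i$ the $i$-th coordinate of $x$ is $\sum_{v\in S} m_{vi} = |e_i\cap S| \bmod 2$, which is nonzero exactly when $|e_i\cap S|$ is odd; hence $\supp(x)=\{e_i : |e_i\cap S|\text{ odd}\}=\eonv(S)$. This is precisely the identity "$\supp(y)=\eonv(T)$" invoked inside the theorem, now recorded for an arbitrary $S\subseteq V$.

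For the converse, the one extra ingredient is that over $\F_2$ a vector is completely determined by its support: since $1$ is the only nonzero scalar, $\bm u=\bm w$ in $\F_2^n$ whenever $\supp(\bm u)=\supp(\bm w)$. So, assuming $\supp(x)=\eonv(S)$, I would put $y=\sum_{v\in S}M_v\in C$; by the forward direction $\supp(y)=\eonv(S)=\supp(x)$, and therefore $x=y=\sum_{v\in S}M_v$.

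There is no genuine obstacle: the corollary is essentially the support identity from the theorem's proof repackaged as an equivalence, the only (trivial) addition being that equality of supports forces equality of vectors over the binary field. One minor point worth noting is that the hypothesis $x\in C$ is not actually used in the converse — any $\bm x\in\F_2^n$ with $\supp(\bm x)=\eonv(S)$ must equal $\sum_{v\in S}M_v$, and that sum lies in $C$ automatically — so the statement could be phrased slightly more generally; I would keep it as stated for consistency with the way it is applied later.
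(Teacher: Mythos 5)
Your proof is correct and matches the argument the paper intends: the forward direction is exactly the support computation $\supp\bigl(\sum_{v\in S}M_v\bigr)=\eonv(S)$ already established inside the proof of Theorem~\ref{main}, and the converse is the immediate observation that vectors over $\F_2$ are determined by their supports. Your side remark that $x\in C$ is not needed for the converse is also accurate.
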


\subsection{Complete 3-partite 3-uniform hypergraph} 
Consider the complete $3$-partite 3-uniform hypergraph with partite sets each of size $n$. Suppose $V=X\cup Y\cup Z$ where $X=\{x_1,x_2,\ldots,x_n\}$, $Y=\{y_1,y_2,\ldots,y_n\}$, and $Z=\{z_1,z_2,\ldots,z_n\}$. The edges are all possible triples $\{x_i, y_j, z_k\}$ where $1\leq i, j, k \leq n$. We find a lower bound on the minimum distance of a code generated by the incidence matrix of such a hypergraph, using the result in Theorem \ref{main}. Let us look at a few examples first:

\begin{example}
Let $H$ be the complete $3$-partite 3-uniform hypergraph with partite sets each of size 2. Then the incidence matrix of $H$ is given by 
$$
\left [ \begin{array}{cccccccc}
    1 & 1 & 1& 1& 0&0&0&0\\
    0 & 0 & 0& 0& 1&1&1&1\\
    1&0&0&1&1&0&0&1\\ 
    0&1&1&0&0&1&1&0\\
    1&1&0&0&1&1&0&0 \\
    0&0&1&1&0&0&1&1
    \end{array} \right ],$$
which generates a binary code with parameters $[8,4,4]$.
\end{example}

\begin{example}
Let $H$ be the complete $3$-partite 3-uniform hypergraph with partite sets each of size 3. Then the incidence matrix of $H$ is given by 
$$
\left [ \begin{array}{ccccccccccccccccccccccccccc}
    0& 0& 0& 0& 0& 0& 0& 0& 0& 0& 0& 0& 0& 0& 0& 0& 0& 0& 1& 1& 1& 1& 1& 1& 1& 1& 1\\
0& 0& 0& 0& 0& 0& 0& 0& 0& 1& 1& 1& 1& 1& 1& 1& 1& 1& 0& 0& 0& 0& 0& 0& 0& 0& 0\\
1& 1& 1& 1& 1& 1& 1& 1& 1& 0& 0& 0& 0& 0& 0& 0& 0& 0& 0& 0& 0& 0& 0& 0& 0& 0& 0\\
0& 0& 0& 0& 0& 0& 1& 1& 1& 0& 0& 0& 0& 0& 0& 1& 1& 1& 0& 0& 0& 0& 0& 0& 1& 1& 1\\
0& 0& 0&1& 1& 1& 0& 0& 0& 0& 0& 0& 1& 1& 1& 0& 0& 0& 0& 0& 0& 1& 1& 1& 0& 0& 0\\
1& 1& 1& 0& 0& 0& 0& 0& 0& 1& 1& 1& 0& 0& 0& 0& 0& 0& 1& 1& 1& 0& 0& 0& 0& 0& 0\\
0& 0& 1& 0& 0& 1& 0& 0& 1& 0& 0& 1& 0& 0& 1& 0& 0& 1& 0& 0& 1& 0& 0& 1& 0& 0& 1\\
0& 1& 0& 0& 1& 0& 0& 1& 0& 0& 1& 0& 0& 1& 0& 0& 1& 0& 0& 1& 0& 0& 1& 0& 0& 1& 0\\
1& 0& 0& 1& 0& 0& 1& 0& 0& 1& 0& 0& 1& 0& 0& 1& 0& 0& 1& 0& 0& 1& 0& 0& 1& 0& 0
    \end{array} \right ],$$
which generates a binary code with parameters $[27,7,9]$.
\end{example}

\begin{example}
If $H$ is the complete $3$-partite 3-uniform hypergraph with partite sets each of size 4, then the incidence matrix of $H$ generates a binary code of parameters $[64,10,16]$.
\end{example}

We have the following theorem:
\begin{theorem}
The incidence matrix of the complete $3$-partite 3-uniform hypergraph with partite sets each of size $n$ generates a binary code with minimum distance $n^2$.  
\end{theorem}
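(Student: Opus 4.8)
The plan is to apply Theorem~\ref{main} directly: I must show that for every nonempty $S\subseteq V$ with $\eonv(S)\neq\varnothing$ we have $|\eonv(S)|\geq n^2$, and then exhibit a single set achieving equality. Write $S = S_X\cup S_Y\cup S_Z$ where $S_X = S\cap X$, etc., and let $a=|S_X|$, $b=|S_Y|$, $c=|S_Z|$. For an edge $\{x_i,y_j,z_k\}$, the size $|e\cap S|$ is simply $[x_i\in S_X]+[y_j\in S_Y]+[z_k\in S_Z]$, which is odd precisely when an odd number of these three indicator bits equal $1$. Fixing the ``parity pattern'' of the edge relative to $S$, the number of edges with a prescribed pattern $(\epsilon_1,\epsilon_2,\epsilon_3)\in\{0,1\}^3$ is $n_X^{\epsilon_1}n_Y^{\epsilon_2}n_Z^{\epsilon_3}$ where $n_X^1 = a$, $n_X^0 = n-a$, and similarly for $Y,Z$. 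Summing over the four odd patterns $(1,0,0),(0,1,0),(0,0,1),(1,1,1)$ gives the clean closed form
$$|\eonv(S)| = a(n-b)(n-c) + (n-a)b(n-c) + (n-a)(n-b)c + abc.$$

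The core of the argument is then the inequality: for integers $a,b,c\in\{0,1,\dots,n\}$, not all zero and not equal to $(0,0,0)$ in a way that makes the above vanish, the expression $f(a,b,c)$ above is at least $n^2$. The first step is to check when $f(a,b,c)=0$: one sees it forces each of the four nonnegative terms to vanish, and a short case analysis shows this happens iff at least two of $a,b,c$ are $0$ and... actually iff $(a,b,c)$ has at least two coordinates equal to $0$ \emph{or} lies in the degenerate set $\{(0,0,0)\}$ — I will pin this down carefully, since the excluded-case bookkeeping is exactly what separates ``$\eonv(S)=\varnothing$'' from the minimization. For the remaining cases, I expect to argue by treating $f$ as affine (degree one) in each variable separately: fixing $b,c$, the map $a\mapsto f(a,b,c)$ is linear, so its minimum over $a\in[0,n]$ is attained at $a=0$ or $a=n$; iterating this over all three variables reduces to checking $f$ on the eight ``corner'' points $\{0,n\}^3$, where $f$ takes the values $0$ (at $(0,0,0)$), $n^3$ (at $(n,n,n)$), and $n\cdot n = n^2$ (at each of the six points with one or two coordinates equal to $n$). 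Since $n^2\le n^3$ and the only corner giving something below $n^2$ is the one with empty eonv, this establishes the lower bound $d(C)\geq n^2$.

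Finally, for the matching upper bound I take $S = \{x_1\}$ (i.e.\ $a=1$, $b=c=0$): then $\eonv(\{x_1\})$ consists of every edge containing $x_1$, of which there are exactly $n^2$, and this set is nonempty, so by Theorem~\ref{main}, $d(C)\le n^2$. Combining the two bounds yields $d(C)=n^2$.

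The main obstacle I anticipate is not any single estimate but the care needed in the degenerate-case analysis — precisely delineating which $(a,b,c)$ give $\eonv(S)=\varnothing$ so that the corner-point minimization is applied only over the admissible region; the ``linear in each variable, so minimized at a vertex of the box'' reduction is the clean idea that makes the otherwise messy trivariate optimization routine once that bookkeeping is settled. (One should also double-check consistency with the stated examples: $n=2$ gives $d=4$, $n=3$ gives $d=9$, $n=4$ gives $d=16$, all matching.)
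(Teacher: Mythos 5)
Your setup agrees with the paper: the formula $|\eonv(S)| = a(n-b)(n-c)+(n-a)b(n-c)+(n-a)(n-b)c+abc$ is exactly the paper's $f(k_1,k_2,k_3)$, and taking $S=\{x_1\}$ to exhibit a weight-$n^2$ codeword matches its observation $f(1,0,0)=n^2$. But the core of your lower bound — ``$f$ is affine in each variable, so its minimum over the box is at a corner of $\{0,n\}^3$'' — has a genuine gap. That vertex reduction is valid only for the unconstrained minimum of $f$ over the box, which is $0$; what you must bound is the minimum of $f$ over the admissible set where $\eonv(S)\neq\varnothing$, i.e.\ the minimum \emph{positive} value of $f$, and that set is not a box, so pushing a variable to an endpoint can exit it (from $(1,0,0)$, value $n^2$, moving $a$ to the endpoint $0$ lands on the inadmissible $(0,0,0)$). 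Moreover your corner values are miscomputed: in fact $f(n,0,0)=f(0,n,0)=f(0,0,n)=f(n,n,n)=n^3$ while $f(n,n,0)=f(n,0,n)=f(0,n,n)=0$; no corner has value $n^2$. Thus every admissible corner gives $n^3$, so a corner check would ``prove'' $d\ge n^3$, contradicting your own weight-$n^2$ codeword — the reduction as stated cannot yield the bound $n^2$. (Your hedged description of the zero set is also off: $f=0$ exactly at $(0,0,0),(n,n,0),(n,0,n),(0,n,n)$, not when at least two coordinates vanish, since $f(1,0,0)=n^2$.)

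The paper instead closes the lower bound by direct casework on how many of $k_1,k_2,k_3$ vanish: if two vanish, $f(k_1,0,0)=k_1n^2\ge n^2$; if one vanishes, $f(k_1,k_2,0)=n[(n-k_2)k_1+(n-k_1)k_2]\ge n^2$ outside the zero case $(n,n,0)$; if none vanishes, it writes $f=k_3\,g(k_1,k_2)+(n-k_3)h(k_1,k_2)$ with $g(k_1,k_2)=(n-k_1)(n-k_2)+k_1k_2$ and $h(k_1,k_2)=k_1(n-k_2)+(n-k_1)k_2$ and checks $g,h\ge n$ (with the boundary subcases $k_1$ or $k_2$ equal to $n$), giving $f\ge n^2$. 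To repair your proof you would need to replace the corner reduction by an argument of this kind, or by a correct treatment of the minimum positive value of an integer-valued, affine-in-each-variable function, where the relevant candidate points are those adjacent to its zero set rather than the corners of the box.
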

\begin{proof}
Let $V=X\cup Y\cup Z$ be the partition of the vertex set. Let $S$ be a non-empty subset of $V$. Since any edge contains exactly three vertices, $\eonv(S)$ will consist of edges that intersect $S$ at one vertex or three vertices. Assume $|S\cap X| = k_1$, $|S\cap Y|=k_2$, and $|S\cap Z| = k_3$. 
Note that the number of edges that intersect $S$ in three vertices is given by $k_1k_2k_3$ and the ones that intersect $S$ in exactly one vertex is given by \[k_1(n-k_2)(n-k_3)+(n-k_1)k_2(n-k_3)+(n-k_1)(n-k_2)k_3.\]
Thus we have 
\[|\eonv(S)| = k_1(n-k_2)(n-k_3)+(n-k_1)k_2(n-k_3)+(n-k_1)(n-k_2)k_3+k_1k_2k_3 =:f(k_1, k_2, k_3).\]
Now we find the minimum positive value of $f(k_1, k_2, k_3)$ for integers $k_1,k_2,k_3$ over the region $0\leq k_1 \leq n, 0\leq k_2\leq n$, and $0\leq k_3\leq n$, with $k_1+k_2+k_3 \neq 0$.

We first note that $f(k_1, k_2, k_3) = 0$ if and only if $k_1=k_2=k_3=0$ (i.e., $S=\varnothing$) or $(k_1, k_2, k_3) = (n, n,0)$, $(n,0,n)$ or $(0,n,n)$. The next observation is that $f(1,0,0) = f(0,1,0) = f(0,0,1) = n^2$.  We show $|\eonv(S)| =f(k_1, k_2, k_3)\geq n^2$ in the following three cases:\\

{\bf Case 1:} Exactly two of $k_1, k_2, k_3$ is zero.\\
We look at $f(k_1, 0, 0)$ since the expression is symmetric. 
$$f(k_1, 0,0) = k_1n^2 \geq n^2, \:  1\leq k_1\leq n.$$

{\bf Case 2:} Exactly one of $k_1, k_2, k_3$ is zero.\\
By symmetry, we just look at $f(k_1, k_2, 0)$:
$$f(k_1, k_2, 0) = k_1n(n-k_2)+k_2n(n-k_1) = n[(n-k_2)k_1+(n-k_1)k_2].$$
Since we do not want $f(k_1,k_2,0)$ to be zero, we assume $k_1 k_2 <n^2$. 

If $k_1<n$ and $k_2=n$, then the expression reduces to $n^2(n-k_1)\geq n^2$, since $n-k_1\geq 1$.

If $k_1,k_2<n$, then we have $k_2, n-k_2 \geq 1$ and so we get $(n-k_2)k_1+(n-k_1)k_2 \geq k_1+(n-k_1)\geq n$ and hence
$n[(n-k_2)k_1+(n-k_1)k_2] \geq n^2$. \\

{\bf Case 3:} $k_1, k_2, k_3$ are all non-zero. \\
We rewrite the function as 
\begin{align*}
f(k_1,k_2,k_3) & = k_1(n-k_2)(n-k_3)+(n-k_1)k_2(n-k_3)+(n-k_1)(n-k_2)k_3+k_1k_2k_3 \\
& = k_3[(n-k_1)(n-k_2)+k_1k_2]+(n-k_3)[k_1(n-k_2)+(n-k_1)k_2] \\
& = k_3g(k_1,k_2)+(n-k_3)h(k_1, k_2),
\end{align*}

where $1\leq k_1, k_2, k_3 \leq n$, $g(k_1,k_2) = (n-k_1)(n-k_2)+k_1k_2$, and $h(k_1, k_2) = k_1(n-k_2)+(n-k_1)k_2$.

If $k_1=k_2=n$, then $g(k_1, k_2) = n^2$ and $h(k_1, k_2)=0$ and so $f(k_1, k_2, k_3) = k_3n^2\geq n^2$.

If $k_1=n$ and $k_2 < n$, then $g(k_1, k_2) = nk_2 \geq n$ and $h(k_1,k_2) = n(n-k_2) \geq n$, in which case $f(k_1,k_2,k_3) \geq k_3n+(n-k_3)n = n^2$.

Finally, if $k_1, k_2 <n$, then $g(k_1,k_2)\geq n-k_1+k_1 = n$ and $h(k_1,k_2) \geq n-k_2+k_2 = n$, which results in $f(k_1,k_2,k_3) \geq k_3n+(n-k_3)n = n^2$.

\end{proof}

\subsection{A special case: projective plane} A projective plane $PG(n-1,2)$ is a combinatorial structure that consists of points, which are 1-dimensional subspaces of $\F_2^n$ and lines, which are 2-dimensional subspaces. We can consider the projective plane to be a hypergraph by taking the points to be vertices and the lines to be the edges. The following theorem is presented in terms of eonv from its original (Lemma 3.1) that appeared in \cite{baartmans}. 
\begin{theorem}
Let $A$ be a set of points in $PG(n-1,2)$, $n\geq 3$. 
\begin{enumerate}
    \item If $|\eonv(A)|=0$, then $A=\varnothing$ or $A$ is a hyperoval.
    \item If $|\eonv(A)|>0$, then $|\eonv(A)|\geq 2^{n-1}-1$ with the equality only if $A$ is a single point, or the union of a point and a hyperoval, or a hyperoval with a point deleted. 
    \item If $|\eonv(A)|>2^{n-1}$, then $|\eonv(A)\geq 2^n-4$ with the equality only if $A$ is a set of two points, or $A$ is the union of a hyperoval with $i$ points deleted and $2-i$ points off the hyperoval, $i=0,1,2$. 
\end{enumerate}
\end{theorem}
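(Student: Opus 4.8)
The plan is to translate the statement into a question about the weights of codewords in the code $C = C(\mathrm{PG}(n-1,2))$ generated by the point–line incidence matrix, and then invoke what is already known about this classical code. By Theorem~\ref{main}, $|\eonv(A)|$ is exactly the Hamming weight of the codeword $\sum_{p\in A} M_p$, and every codeword arises this way; so the three clauses are statements about which codewords of $C$ have weight $0$, weight less than $2^{n-1}$, and weight between $2^{n-1}$ and $2^n-4$. The code $C$ is the well-studied code of the projective geometry $\mathrm{PG}(n-1,2)$ (in the binary case it coincides, up to the usual identifications, with a Reed–Muller-type code generated by the characteristic vectors of hyperplanes, equivalently the dual of the length-$(2^n-1)$ simplex/Hamming setup). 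I would first recall its basic structure: a nonzero codeword is supported on the symmetric difference of the point sets meeting a set $A$ in an odd number of points, and the relevant combinatorial objects — points, hyperplanes (= complements of hyperplanes in weight), and hyperovals — are precisely the low-weight configurations.

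The concrete steps I would carry out are the following. First, for clause~(1): show $|\eonv(A)|=0$ means every line meets $A$ evenly; a nonempty such $A$ must be a hyperoval (a set meeting every line in $0$ or $2$ points), and one checks by a counting argument on lines through a fixed point of $A$ that $|A|=2^{n-1}+1$ is forced and conversely every hyperoval works. Second, for clause~(2): assume $|\eonv(A)|>0$, i.e. the codeword is nonzero; I would show the minimum weight of $C$ is $2^{n-1}-1$ by exhibiting a single point $A=\{p\}$ (the line-count through $p$ gives exactly $2^{n-1}-1$ lines not through $p$, hence $|\eonv(\{p\})| = 2^{n-1}-1$) and then proving no nonzero codeword has smaller weight — this is the minimum-distance computation for this code, and the extremal configurations are enumerated by perturbing a minimum-weight codeword: adding/removing a point from a hyperoval (which has $\eonv = \varnothing$) changes the weight in a controlled way, yielding "point", "point $\cup$ hyperoval", and "hyperoval minus a point". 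Third, for clause~(3): the same perturbation analysis one level up — take $A$ a set of two points, or a hyperoval adjusted by two points (deleting $i$ on-hyperoval points and adding $2-i$ off-hyperoval points, $i=0,1,2$), compute $|\eonv(A)| = 2^n-4$ in each case, and show nothing with $|\eonv(A)|$ strictly between $2^{n-1}$ and $2^n-4$ exists, so the next weight value after the $2^{n-1}-1$ level (and the $2^{n-1}$ value, attained by hyperplane complements) is $2^n-4$.

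I expect the main obstacle to be the converse/uniqueness half of clauses (2) and (3): proving that \emph{only} the listed configurations achieve the stated weights. The forward direction (each listed $A$ gives the claimed value of $|\eonv(A)|$) is a routine inclusion–exclusion count on how many lines meet $A$ oddly. The hard part is the classification of near-minimal-weight codewords, which is essentially the statement that the low end of the weight distribution of this projective-geometry code is sparse and its extremal words are geometrically rigid. Since the excerpt explicitly says this theorem is a restatement of Lemma~3.1 of~\cite{baartmans}, the cleanest route is to cite that result directly: phrase the proof as "we rewrite Lemma~3.1 of~\cite{baartmans} in the language of $\eonv$" and verify only that the dictionary ($|\eonv(A)| = w_H(\sum_{p\in A}M_p)$, via Theorem~\ref{main}) matches the original hypotheses and conclusions term for term. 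If a self-contained argument is wanted instead, I would reduce clauses (2) and (3) to the known weight hierarchy of the binary projective code and to the fact that its codewords of weight $\le 2^n-4$ are spanned by a small number of point-rows, then run the perturbation bookkeeping above; but the citation-based proof is shorter and is clearly the intended one.
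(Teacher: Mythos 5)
Your primary recommendation --- cite Lemma 3.1 of \cite{baartmans} and verify only the dictionary $|\eonv(A)|=w_H\bigl(\sum_{p\in A}M_p\bigr)$ supplied by Theorem~\ref{main} --- is exactly what the paper does: it offers no independent proof, presenting the statement as a rephrasing of that lemma in eonv language. So on the intended route you and the paper agree, and your observation that the hard content is the classification of the near-minimum-weight configurations (which lives in \cite{baartmans}) is accurate.

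However, several assertions in your backup ``self-contained'' sketch are wrong and would need repair if you ever tried to run it. (i) A nonempty set meeting every line of $PG(n-1,2)$ evenly (i.e.\ in $0$ or $2$ points) has size $2^{n-1}$, not $2^{n-1}+1$: your own counting argument gives $|A|-1=2^{n-1}-1$ since there are $2^{n-1}-1$ lines through a fixed point; these sets are precisely the complements of hyperplanes. (ii) Consequently your parenthetical that weight $2^{n-1}$ is ``attained by hyperplane complements'' is false --- hyperplane complements are the hyperovals and give the zero codeword. (iii) $\eonv(\{p\})$ is the set of lines \emph{through} $p$ (of which there are $2^{n-1}-1$), not the lines missing $p$. (iv) The code in question has length equal to the number of lines, $(2^n-1)(2^{n-1}-1)/3$, and dimension $2^n-1-n$; it coincides with the $[7,4]$ Hamming code only when $n=3$, so identifying it with a Reed--Muller-type code of length $2^n-1$ generated by hyperplane characteristic vectors, and then appealing to ``the known weight hierarchy'' of that code, does not work for $n\ge 4$ --- the weight classification you would need there is precisely the theorem of \cite{baartmans} you are trying to reprove. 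None of this damages the citation-based proof you prefer, but as written the alternative argument would not go through.
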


Combining this with Theorem \ref{main} we get the following corollary:
\begin{corollary}
If $C$ is the binary code generated by the incidence matrix of $PG(n-1,2)$, then the minimum distance $d(C)$ of the code satisfies 
$$d(C)\geq \min\{2^{n-1}-1, 2^n-4\}.$$
\end{corollary}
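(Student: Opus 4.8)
The plan is to combine Theorem~\ref{main} with the theorem quoted immediately above. Viewing $PG(n-1,2)$ as the hypergraph whose vertex set $V$ is its set of points and whose edges are its lines, Theorem~\ref{main} gives
\[
d(C)=\min_{\substack{\varnothing\neq S\subseteq V\\ \eonv(S)\neq\varnothing}}|\eonv(S)|,
\]
so it suffices to bound $|\eonv(S)|$ from below over all nonempty sets of points $S$ with $\eonv(S)\neq\varnothing$, that is, with $|\eonv(S)|>0$.

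I would then fix such an $S$ and apply part~(2) of the quoted theorem with $A=S$: since $|\eonv(S)|>0$, it follows that $|\eonv(S)|\geq 2^{n-1}-1$. Taking the minimum over all admissible $S$ gives $d(C)\geq 2^{n-1}-1$. To finish, note that for every $n\geq 3$ one has $2^n-4-(2^{n-1}-1)=2^{n-1}-3\geq 1>0$, so $\min\{2^{n-1}-1,\,2^n-4\}=2^{n-1}-1$, and therefore $d(C)\geq\min\{2^{n-1}-1,\,2^n-4\}$, as claimed.

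I expect essentially no obstacle here: all the genuine combinatorial content is packaged inside Theorem~\ref{main} and the cited Lemma~3.1 of \cite{baartmans}, both of which are available. The only points requiring care are (i) confirming that the hypergraph model (points as vertices, lines as edges) matches the notion of ``a set of points $A$'' used in the quoted theorem, so that $\eonv$ carries the same meaning in both statements, and (ii) observing that part~(2) already produces a lower bound valid for \emph{every} positive value of $|\eonv(S)|$; hence parts~(1) and~(3) — and with them the $2^n-4$ term in the stated minimum — are not actually needed for this inequality, and would matter only for a sharper, case-dependent estimate of $d(C)$.
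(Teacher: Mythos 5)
Your proof is correct and is exactly the argument the paper intends: the corollary is obtained by combining Theorem~\ref{main} with part~(2) of the quoted result from \cite{baartmans}, applied to each nonempty $S$ with $\eonv(S)\neq\varnothing$. Your added observation that part~(2) alone already gives $d(C)\geq 2^{n-1}-1$, making the $2^n-4$ term in the stated minimum superfluous for $n\geq 3$, is accurate but does not change the substance.
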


\begin{example}
The Fano plane is the projective plane $PG(2,2)$.  The points are all the 1-dimensional subspaces of $\F_2^3$, which are all generated by the non-zero vectors in $\F_2^3$. A two-dimensional subspace will have three non-zero vectors from $\F_2^3$.
Letting $0 \mapsto (0,0,1) $, $1\mapsto (0,1,0)$, $3\mapsto (0,1,1)$, $2\mapsto (1,0,0)$, $4 \mapsto (1,1,0)$, $5 \mapsto (1,1,1)$, $6 \mapsto (1,0,1)$, and letting $V=\{0,1,2,3,4,5,6\}$, we get the lines to be the following set of triples:
$$\{0,1,3\}, \{1,2,4\}, \{2,3,5\},\{3,4,6\},\{4,5,0\}, \{5,6,1\},\{6,0,2\}.$$
The binary code corresponding to the Fano plane will be generated by the following matrix, if we take the above ordering for the lines:
$$
 \begin{bmatrix}
    1 & 0 & 0 & 0 & 1 & 0 & 1\\
    1 & 1 & 0 & 0 & 0 & 1 & 0 \\
    0 & 1 & 1 & 0 & 0 & 0 & 1 \\
    1 & 0 & 1 & 1 & 0 & 0 & 0 \\
    0 & 1 & 0 & 1 & 1 & 0 & 0 \\
    0 & 0 & 1 & 0 & 1 & 1 & 0 \\
    0 & 0 & 0 & 1 & 0 & 1 & 1
    \end{bmatrix}.
$$
The binary code generated by this matrix turns out to be equivalent to the well known Hamming code of parameters $[7,4,3]$. As we see, the minimum distance in this case is $2^2-1 = 2^{n-1}-1$. 

\begin{figure}[h]
	\begin{center}
	\begin{tikzpicture}[scale=1.5, colorstyle/.style={circle, fill, black, scale = .5}, >=stealth]
		
		\node (2) at (0,1.85)[colorstyle, label=above:$0$]{};
		\node (7) at (0,0)[colorstyle, label=below:$5$]{};
		\node (1) at (-1.8,-1)[colorstyle, label=below:$1$]{};
		\node (4) at (1.8,-1)[colorstyle, label=below:$2$]{};	
		\node (5) at (0,-1)[colorstyle, label=below:$4$]{};	
		\node (6) at (0.85,0.5)[colorstyle, label=right:$6$]{};	
		\node (3) at (-0.85,0.5)[colorstyle, label=left:$3$]{};	
		
		\draw [] (2)--(7)--(5);
        \draw [] (1)--(7)--(6);
        \draw [] (1)--(3)--(2);
        \draw [] (1)--(5)--(4);
        \draw [] (4)--(6)--(2);
        \draw [] (4)--(7)--(3);
 	\draw[thick](0,0) circle (0.98);
	\end{tikzpicture}
	\caption{The Fano plane}\label{Fano}
	\end{center}
	\end{figure}
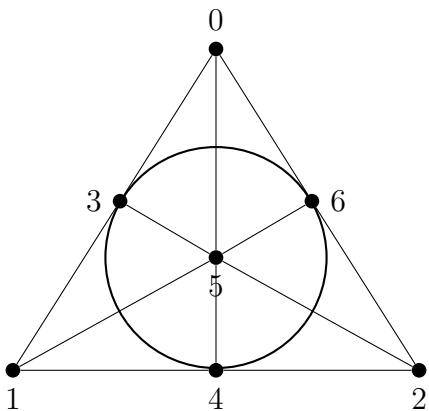
\end{example}

\subsection{Circulant hypergraphs and cyclic codes}
Consider a circulant matrix, that is a matrix given by 
$$
M=\begin{bmatrix}
    a_1    & a_2    & a_3   & \hdots & a_n \\
    a_n    & a_1    & a_2   & \hdots & a_{n-1} \\
    a_{n-1}& a_n    & a_1   & \hdots & a_{n-2} \\
    \vdots & \vdots & \vdots& \ddots & \vdots \\
    a_2    & a_3    & a_4   & \hdots & a_1
\end{bmatrix}. 
$$
Note that the matrix is uniquely determined by the first row, where every other row is a cyclic shift of the previous row. If we consider such a circulant matrix to be the incidence matrix for a hypergraph, we see that the hypergraph is both vertex-regular and edge-regular with the same degree of regularity for both. We call such a hypergraph a ``circulant hypergraph". Consider a circulant hypergraph $H=(V,E)$ on $n$ vertices $1,2,\ldots,n$ and $m$ edges $e_1,e_2,\ldots,e_m$ whose incidence matrix is $M$ with rows $M_1,M_2,\ldots,M_n$. Suppose $C$ is the code $C=\rs(M)$. 

Now we describe the code $C=\rs(M)$ in terms of polynomials. For a single-variable polynomial $p(x)$, its support is \[\supp(p):=\text{ the set of indices of $x$ (with nonzero coefficients) in $p(x)$}.\]
For example, if $p(x)=5-x^2$, then $\supp(p)=\{0,2\}$. If $p\in \mathbb F_2(x)$, then  $p(x)=\displaystyle\sum_{j\in \supp(p)} x^j$.

Note that $C=\rs(M)$ is generated by polynomials $p_1,p_2,\ldots,p_n$ where 
\[p_i(x)=\displaystyle\sum_{j\in \supp(M_i)} x^{j-1}.\]
Note that $p_i$ corresponds to vertex $i$ in $H$ for all $i=1,2,\ldots,n$. Then $C$ is generated by $\{p_{1},p_{2},\ldots,p_{n}\}$.\\

For a set $S\subseteq V=\{1,2,\ldots,n\}$, $\eonv(S)=\left\lbrace e_{i+1}\;:\;i\in \supp\left(\displaystyle\sum_{t\in S} p_{t}\right)\right\rbrace$ and 
\[|\eonv(S)|=\left\vert\supp\left(\displaystyle\sum_{t\in S} p_{t}\right)\right\vert.\]
Then by Theorem \ref{main},
\[\d(C)=\min_{\substack{\varnothing \neq S\subseteq V\\ \sum_{t\in S} p_{t}\neq 0}} \left\vert\supp\left(\displaystyle\sum_{t\in S} p_{t}\right)\right\vert.\]

We can actually say more about the $2$-rank of incidence matrix $M$, by exploring the connection that corresponding  hypergraph $H$ have with cyclic codes, i,e, a code $C$ that is invariant under the cyclic shift. More precisely, if $M$ is a generator matrix for a binary linear code $C$, then $C$ is a cyclic code generated by $p(x)=a_1+a_2x+\dots +a_{n-1}x^{n-1}$ over the quotient ring $\F_2[x]/\langle x^n-1\rangle$. The $2$-rank of $M$ is the same as the dimension of the binary cyclic code generated by $p(x)$ of length $n$. 

\begin{theorem} $($\cite{macwilliams}$)$
Let $C$ be the binary cyclic code generated by $p(x)$ of length $n$. If $g(x) = GCD(p(x), x^{n}-1)$ in $\F_2[x]$, then 
$$dim(C) = n-deg(g(x)).$$
\end{theorem}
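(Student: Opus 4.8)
The plan is to identify the cyclic code $C$ with a principal ideal of the quotient ring $R=\F_2[x]/\langle x^n-1\rangle$ and then exhibit an explicit $\F_2$-basis of that ideal. Recall that a linear code of length $n$ is cyclic precisely when, under the identification of a vector $(c_0,c_1,\dots,c_{n-1})$ with the polynomial $c_0+c_1x+\dots+c_{n-1}x^{n-1}$, the code is an ideal of $R$; the code ``generated by $p(x)$'' is then the principal ideal $\langle p(x)\rangle\subseteq R$. Since $R$ is a quotient of the principal ideal domain $\F_2[x]$, every ideal of $R$ is principal, so there is no loss in replacing $p$ by a convenient generator.

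First I would show $\langle p(x)\rangle=\langle g(x)\rangle$ in $R$. Because $g(x)\mid p(x)$ in $\F_2[x]$, writing $p(x)=g(x)q(x)$ gives $p(x)\in\langle g(x)\rangle$. Conversely, by Bézout there are $a(x),b(x)\in\F_2[x]$ with $g(x)=a(x)p(x)+b(x)(x^n-1)$, so reducing modulo $x^n-1$ yields $g(x)=a(x)p(x)\in\langle p(x)\rangle$. Hence the two ideals coincide, and $C=\langle g(x)\rangle$ where $g(x)$ is a divisor of $x^n-1$.

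Next, write $x^n-1=g(x)h(x)$ with $r:=\deg g(x)$, so $\deg h(x)=n-r$, and claim that $\mathcal B=\{g(x),\,xg(x),\,\dots,\,x^{\,n-r-1}g(x)\}$ is an $\F_2$-basis of $C$. These $n-r$ polynomials have degrees $r,r+1,\dots,n-1$, all strictly below $n$, so none is altered by reduction modulo $x^n-1$; having pairwise distinct degrees, they are $\F_2$-linearly independent. For spanning, take any element of $C$, represented by $f(x)g(x)$ for some $f(x)\in\F_2[x]$, and divide $f$ by $h$: $f(x)=h(x)s(x)+t(x)$ with $\deg t(x)<n-r$. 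Then $f(x)g(x)=s(x)\big(g(x)h(x)\big)+t(x)g(x)\equiv t(x)g(x)\pmod{x^n-1}$, and since $\deg\big(t(x)g(x)\big)<n$, this is already an $\F_2$-combination of the elements of $\mathcal B$. Therefore $\dim C=|\mathcal B|=n-\deg g(x)$.

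The argument is essentially self-contained once one uses that $R$ is a principal ideal ring, so the only substantive points are the reduction $\langle p\rangle=\langle g\rangle$ via Bézout and the division-with-remainder bookkeeping. The main thing to get right is that the degree bounds in the basis argument are strict ($\le n-1$ for the basis elements and $<n$ for $t(x)g(x)$): this is exactly what prevents any wrap-around modulo $x^n-1$ and makes $\mathcal B$ simultaneously independent and spanning.
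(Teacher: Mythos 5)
Your proof is correct: the reduction $\langle p(x)\rangle=\langle g(x)\rangle$ via B\'ezout and the basis $\{g(x),xg(x),\dots,x^{n-\deg g-1}g(x)\}$ obtained by division with remainder is exactly the standard argument for this classical fact. The paper itself offers no proof, citing MacWilliams--Sloane instead, and your write-up is essentially the textbook proof found there, with the degree bounds handled correctly.
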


Now we explore the minimum distance of certain cyclic codes generated by circulant hypergraphs. Consider a circulant hypergraph $H=(V,E)$. For an hyperedge $e$, $\overline{e}$ denotes the complement of the hyperedge $e$, i.e., $\overline{e}=V\setminus e$ If $e$ corresponds to the binary column $\mathbf{c}$ in the incidence matrix of $H$, then $\overline{e}$ corresponds to the binary vector $\overline{\mathbf{c}}$ which is given by \[\overline{\mathbf{c}}=1+\mathbf{c} \pmod{2}.\]

\begin{lemma}\label{lem}
Consider a circulant hypergraph $H=(V,E)$. Let $S$ be a subset of $V$ and $e$ be a hyperedge. Then we have the following:
\begin{enumerate}
    \item[(a)] If $|S|$ is even, then $|S\cap e|$ is odd if and only $|S\cap \overline{e}|$ is odd. 
    \item[(b)] If $|S|$ is odd, then $|S\cap e|$ is odd if and only if $|S\cap \overline{e}|$ is even. 
\end{enumerate}

\end{lemma}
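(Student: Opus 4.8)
The plan is to exploit the elementary identity $|S\cap e| + |S\cap \overline{e}| = |S|$, which holds because $e$ and $\overline{e}$ partition $V$ and every vertex of $S$ lies in exactly one of the two parts. This single equation makes both statements a matter of parity bookkeeping, so the proof should be short; the only real work is to organize the two cases cleanly.

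First I would fix a circulant hypergraph $H=(V,E)$, a subset $S\subseteq V$, and a hyperedge $e$, and record the partition identity $|S\cap e| + |S\cap \overline{e}| = |S|$. From this, reading the equation modulo $2$, we get $|S\cap e| \equiv |S| - |S\cap \overline{e}| \equiv |S| + |S\cap \overline{e}| \pmod 2$. For part (a), assume $|S|$ is even; then $|S\cap e| \equiv |S\cap \overline{e}| \pmod 2$, so $|S\cap e|$ is odd exactly when $|S\cap \overline{e}|$ is odd. For part (b), assume $|S|$ is odd; then $|S\cap e| \equiv 1 + |S\cap \overline{e}| \pmod 2$, so $|S\cap e|$ is odd exactly when $|S\cap \overline{e}|$ is even. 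That finishes both directions of both biconditionals at once.

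The main thing to double-check — and really the only subtle point — is that the hypotheses of Lemma \ref{lem} do not actually require circulance: the argument uses nothing about $H$ beyond the fact that $\overline{e} = V\setminus e$, so $e$ and $\overline{e}$ partition $V$. I would keep the "circulant" hypothesis in the statement since that is the ambient setting the authors care about, but I would not invoke it in the proof. I would also note explicitly that $S$ is allowed to be empty (then $|S|=0$ is even and both sides of (a) are "never odd," consistent with the statement), so no nonemptiness assumption is needed.

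There is no genuine obstacle here; the step requiring the most care in the writeup is simply phrasing the parity reduction so that the "if and only if" is visibly symmetric rather than proved in one direction and then the other. If one wanted to be completely explicit, one could instead split into the four subcases according to the parities of $|S\cap e|$ and $|S\cap \overline{e}|$ and tabulate which are compatible with $|S|$ even versus odd, but the modular-arithmetic phrasing is cleaner and I would present that.
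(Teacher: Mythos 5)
Your proof is correct and uses exactly the same idea as the paper: the disjoint decomposition $S=(S\cap e)\cup(S\cap\overline{e})$ giving $|S|=|S\cap e|+|S\cap\overline{e}|$, followed by a parity argument. The paper's proof is essentially your argument stated more briefly, so no differences to report.
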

\begin{proof}
Both parts follow from the following simple observation:
$S=(S\cap e)\cup (S\cap \overline{e})$ and since both these sets are disjoint, we have 
$$|S| = |S\cap e|+|S\cap \overline{e}|.$$
\end{proof}

Consider the $km$-uniform  hypergraph on $2km$ vertices whose incidence matrix is the circulant matrix with the first row $k$ blocks of $11..100...0$ ($m$ $1$s followed by $m$ zeroes).
We first look at two examples of such matrices, one with $m=2$ and $m=3$.

The following is an example with $m=2$.
$$ M=\left [ \begin{array}{cccccccccccc}
   1 &1& 0& 0& 1& 1& 0& 0& 1& 1& 0& 0\\ 
   0& 1& 1& 0& 0& 1& 1& 0& 0& 1 & 1& 0\\
  0 & 0& 1 & 1& 0& 0& 1& 1& 0& 0& 1& 1\\ 
  1& 0& 0& 1& 1& 0& 0& 1& 1& 0& 0& 1 \\
  1& 1& 0& 0& 1& 1& 0& 0& 1& 1& 0& 0 \\
  0& 1& 1& 0& 0& 1& 1& 0& 0& 1& 1& 0\\
  0& 0& 1& 1& 0& 0& 1& 1& 0& 0& 1& 1\\
  1& 0& 0& 1& 1& 0& 0& 1& 1& 0& 0& 1\\
  1& 1& 0& 0& 1& 1& 0& 0& 1& 1& 0& 0\\
  0& 1&1& 0& 0& 1& 1& 0& 0& 1& 1& 0\\
  0& 0& 1& 1& 0& 0& 1& 1& 0& 0& 1&1 \\
1& 0& 0& 1& 1& 0& 0& 1& 1& 0& 0& 1
 \end{array} \right]. 
$$
Note that the columns of $M$ can be labeled as
$$\{\mathbf{c}_1, \mathbf{c}_2,\overline{\mathbf{c}_1}, \overline{\mathbf{c}_2}, \mathbf{c}_1, \mathbf{c}_2,\overline{\mathbf{c}_1}, \overline{\mathbf{c}_2}\},$$ which corresponds to
$$\{e_1, e_2, \overline{e_1}, \overline{e_2}, e_1, e_2, \overline{e_1}, \overline{e_2}\}$$ as hyperedges.  

The following is an example with $m=3$.
$$
M= \left[ \begin{array}{cccccccccccc}1& 1& 1& 0& 0& 0& 1& 1& 1& 0& 0& 0 \\ 0& 1& 1& 1& 0& 0& 0& 1& 1& 1& 0& 0\\ 0& 0& 1& 1& 1& 0& 0& 0& 1& 1& 1& 0\\ 0& 0& 0& 1& 1& 1& 0& 0& 0& 1& 1& 1\\ 1& 0& 0& 0& 1& 1& 1& 0& 0& 0& 1& 1\\ 1& 1& 0& 0& 0& 1& 1& 1& 0& 0& 0& 1\\ 1& 1& 1& 0& 0& 0& 1& 1& 1& 0& 0& 0 \\ 0& 1& 1& 1& 0& 0& 0& 1& 1& 1& 0& 0 \\ 0& 0& 1& 1& 1& 0& 0& 0& 1& 1& 1& 0 \\ 0& 0& 0& 1& 1& 1& 0& 0& 0& 1& 1& 1 \\ 1& 0& 0& 0& 1& 1& 1& 0& 0& 0& 1& 1 \\ 1& 1& 0& 0& 0& 1& 1& 1& 0& 0& 0& 1 \end{array} \right]. 
$$
In this case, the columns of $M$ can be labeled as 
$$\{\mathbf{c}_1, \mathbf{c}_2,\mathbf{c}_3, \overline{\mathbf{c}_1}, \overline{\mathbf{c}_2}, \overline{\mathbf{c}_3},\mathbf{c}_1, \mathbf{c}_2,\mathbf{c}_3, \overline{\mathbf{c}_1}, \overline{\mathbf{c}_2}, \overline{\mathbf{c}_3}, \},$$ which corresponds to
$$\{e_1, e_2, e_3, \overline{e_1}, \overline{e_2}, \overline{e_3}, e_1, e_2, e_3, \overline{e_1}, \overline{e_2}, \overline{e_3}\}$$
as hyperedges.

\begin{theorem}
Let $\mathbf{r}$ be a binary vector of length $2km$, where there are $k$ blocks of $11..100...0$ ($m$ $1$s followed by $m$ zeroes). Then the minimum distance of the cyclic code generated by $\mathbf{r}$ is at least $k$ if $m=1$ and it is at least $2k$ if $m\geq 2$. 
\end{theorem}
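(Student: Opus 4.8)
The plan is to use Theorem~\ref{main} together with Lemma~\ref{lem}, exploiting the special block structure of the incidence matrix. Write $n = 2km$ for the length, label the columns as
$$\mathbf{c}_1,\dots,\mathbf{c}_m,\overline{\mathbf{c}_1},\dots,\overline{\mathbf{c}_m},\ \mathbf{c}_1,\dots,\mathbf{c}_m,\overline{\mathbf{c}_1},\dots,\overline{\mathbf{c}_m},\ \dots$$
(the pattern $\mathbf{c}_1,\dots,\mathbf{c}_m,\overline{\mathbf{c}_1},\dots,\overline{\mathbf{c}_m}$ repeated $k$ times), so that the hyperedge multiset is $k$ copies of $\{e_1,\dots,e_m,\overline{e_1},\dots,\overline{e_m}\}$, exactly as in the two worked examples. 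By Theorem~\ref{main} it suffices to show that for every nonempty $S\subseteq V$ with $\eonv(S)\neq\varnothing$ we have $|\eonv(S)|\geq k$ when $m=1$ and $|\eonv(S)|\geq 2k$ when $m\geq 2$.

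First I would split on the parity of $|S|$. If $|S|$ is odd, Lemma~\ref{lem}(b) says that for each $j$, exactly one of $e_j,\overline{e_j}$ lies in $\eonv(S)$; since this holds inside each of the $k$ repeated blocks, $\eonv(S)$ contains exactly one edge from each $\{e_j,\overline{e_j}\}$ pair in each block, giving $|\eonv(S)| = km \geq k$ (and $\geq 2k$ once $m\geq 2$), and this is always positive, so the odd case is immediate. The substantive case is $|S|$ even, where Lemma~\ref{lem}(a) tells us $e_j\in\eonv(S)\iff\overline{e_j}\in\eonv(S)$, so edges enter $\eonv(S)$ in pairs; if $t$ of the indices $j\in\{1,\dots,m\}$ satisfy $|S\cap e_j|$ odd, then within each block $2t$ edges are contributed, so $|\eonv(S)| = 2kt$. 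Thus $\eonv(S)\neq\varnothing$ forces $t\geq 1$, hence $|\eonv(S)|\geq 2k$, which already covers both the $m=1$ bound ($k$, weaker than $2k$) and the $m\geq 2$ bound.

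Wait — this argument as stated would give $2k$ even when $m=1$, which is stronger than claimed; the subtlety I expect to be the real obstacle is that the ``blocks'' of the circulant are \emph{cyclic shifts}, not literally the columns $e_1,\dots,e_m,\overline{e_1},\dots,\overline{e_m}$ repeated verbatim. One must check that each of the $k$ periods of the circulant genuinely reproduces the same complementary-pair structure relative to a fixed $S$: a cyclic shift of the row pattern permutes which vertex index is ``first'' in a block, so I need to verify that the shift by $2m$ positions (one full period) fixes $S$ setwise in the relevant sense, or more carefully, that the column set, as a \emph{multiset of hyperedges}, really is $k$ disjoint copies of a single pattern $\{e_1,\dots,e_m,\overline{e_1},\dots,\overline{e_m}\}$ as the examples suggest. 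Once that combinatorial bookkeeping is pinned down — i.e., once one confirms column $i$ and column $i+2m$ are the same hyperedge — the counting above goes through verbatim. I would therefore: (1) establish the column-labeling claim (column $i+2m$ equals column $i$ as a subset of $V$, using periodicity of the defining row with period $2m$); (2) for fixed $S$, within one period identify the set $J_{\mathrm{odd}} = \{j : |S\cap e_j| \text{ odd}\}$ and apply Lemma~\ref{lem}; (3) conclude $|\eonv(S)| = k\cdot|J|$ for the appropriate $J$ (all of $\{e_j\}\cup\{\overline{e_j}\}$-halves in the odd-$|S|$ case, or $\{e_j,\overline{e_j}: j\in J_{\mathrm{odd}}\}$ in the even case), and read off the stated lower bounds, noting that the $m=1$ case with $|S|$ odd is exactly where the bound drops to $k$ since then there is a single pair $e_1,\overline{e_1}$ and one of them lies in each of the $k$ copies.
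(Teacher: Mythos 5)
Your proposal is correct and follows essentially the same route as the paper: Theorem \ref{main} plus Lemma \ref{lem}, with the columns identified as $k$ copies of the $2m$ hyperedges $e_1,\dots,e_m,\overline{e_1},\dots,\overline{e_m}$ and a split on the parity of $|S|$ giving $|\eonv(S)|=km$ in the odd case and $|\eonv(S)|=2kt$ in the even case. The periodicity bookkeeping you flag (column $j+2m$ equals column $j$, column $j+m$ is the complement) is exactly the ``main observation'' the paper asserts from the block structure, and the only cosmetic difference is that the paper handles $m=1$ as a separate special case while you fold it into the same parity analysis.
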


\begin{proof}
Let $C$ be the cyclic code generated by $\mathbf{r}$ and $H=(V,E)$ be the corresponding circulant hypergraph with the circulant incidence matrix $M$ whose first row is $\mathbf{r}$.

We first look at the special case of $m=1$ in which every column of $M$ is either equal to 
$$\mathbf{c}_i = 
\left [ \begin{array}{c}
    1\\
    0 \\
    1  \\
    0 \\
    . \\
    .  \\
    .  \\
    1\\
    0
    \end{array} \right ]
 \:\:\: \textrm{or} \:\:\: \left [ \begin{array}{c}
    0\\
    1 \\
    0  \\
    1 \\
    . \\
    .  \\
    .  \\
    0\\
    1
    \end{array} \right ].$$
Since we have $k$ copies of each, this means that for any $S\subseteq V$, we have $|\eonv(S)| = 0, k$ or $2k$. This proves the case for $m=1$ by Theorem \ref{main}.

Now assume $m\geq 2$. We prove that $|\eonv(S)| =0$ or $ \geq 2k$ for any non-empty subset $S$ of the vertex set $V$.  

The main observation is that because of the block structure of the rows, $\mathbf{c}_i$ shifted $m$ times will result in $\overline{\mathbf{c}_i}$ and $\mathbf{c}_i$ shifted $2m$ times will result in $\mathbf{c}_i$ again. So, all the columns of $M$ can be expressed as $k$ copies of $\{\mathbf{c}_1, \overline{\mathbf{c}_1}, \mathbf{c}_2, \overline{\mathbf{c}_2}, \dots, \mathbf{c}_k, \overline{\mathbf{c}_k} \}$ (see preceding examples), with the corresponding hyperedges denoted by $\{e_1, \overline{e_1},e_2, \overline{e_2}, \dots, e_m, \overline{e_m} \}$. 

Assume that $|S|$ is even. Then, by Lemma \ref{lem}, $|S\cap e_i|$ is odd if and only if $|S\cap \overline{e_i}|$ is odd. This means, for any such $S$, we have two cases:
\begin{itemize}
 \item $|S\cap e_i|$ is even for all $i=1,2,\ldots,m$. In this case, $\eonv(S)=\varnothing$ for all $i=1,2,\ldots,m$.
\item  $|S\cap e_i|$ is odd for at least one $i$. In this case, we have $|\eonv(S)| \geq 2k$ since all $k$ copies of $e_i$ and $\overline{e_i}$ will be counted in $\eonv(S)$. 
\end{itemize}

Now assume $|S|$ is odd.  Then, by Lemma \ref{lem}, either $|S\cap e_i|$ is odd or $|S\cap \overline{e_i}|$ is odd for all $i=1, 2, \dots, m$. Thus we have $|\eonv(S)| \geq km \geq 2k$ since $m\geq 2$. 

Finally the result follows from Theorem \ref{main}: 
\[\d(C)=\min_{\substack{\varnothing \neq S\subseteq V\\ \eonv(S)\neq \varnothing}} |\eonv(S)|\geq 2k.\]
\end{proof}

\section{Open problems regarding self-duality}
Let $C$ be a binary code generated by the incidence matrix $M$ of a hypergraph. Since $C=\rs(M)$, $C^{\perp}=\ns(M)$, the null space of $M$. Recall that $C$ is self-orthogonal if $C\subseteq C^{\perp}$ and $C$ is self-dual if $C= C^{\perp}$. Therefore $C$ is self-dual if and only if $\rs(M)=\ns(M)$. Also, $C$ is self-dual if it is self-orthogonal and its dimension is $n/2$, where $n$ is the length of the code or the number of columns of $M$. 

\begin{remark}\label{inc} 
Let $M$ be the incidence matrix of a connected graph $G$ on $n$ vertices. Consider  the linear code $C=\rs(M)$ generated by $M$.

\begin{enumerate}
    \item The dimension of $C=\rs(M)$ is either $n-1$ or $n$, which means that $G$ cannot generate a self-dual code if the number of edges of $G$ is odd or less than $2n-2$. 
    
    \item $C=\rs(M)$ is self-orthogonal if and only if $MM^T=O_n$ in $\mathbb F_2$ if and only if $M_u\cdot M_v=0$ in $\mathbb F_2$ for all vertices $u,v$ in $G$, which is equivalent to 
\begin{enumerate}
    \item even degree of each vertex, and
    \item $|\e[u]\cap \e[v]|$ is even for all distinct vertices $u,v$, where $\e[u]$ denotes the set of edges containing $u$.
\end{enumerate}
\end{enumerate}

\end{remark}

For the special case of a connected graph, we have the following theorem:
\begin{theorem}\label{selfdual}
Let $G$ be a connected graph on $n$ vertices and $m$ edges with incidence matrix $M$. Then the binary linear code $\rs(M)$ is self-dual if and only if
\begin{enumerate}
    \item[(a)] $m=2n-2$, and
    \item[(b)] $|\e(u)\cap \e(v)|$ is even for all vertices $u,v$ in $G$ (which implies even degree of each vertex).
\end{enumerate}
\end{theorem}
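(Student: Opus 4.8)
The plan is to build the equivalence on top of two facts: the self‑orthogonality criterion of Remark \ref{inc}(2), and the exact dimension $\dim\rs(M)=n-1$ for a connected graph. The bulk of the argument is then just a dimension count, since self‑duality is equivalent to self‑orthogonality together with the code dimension being half the length, and here the length is $m$, the number of columns/edges of $M$.

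First I would pin down $\dim\rs(M)$. Since $G$ is $2$‑uniform, for any $S\subseteq V$ the set $\eonv(S)=\{e:|e\cap S|\text{ is odd}\}$ is exactly the edge cut $[S,V\setminus S]$, which is nonempty whenever $\varnothing\neq S\subsetneq V$ because $G$ is connected; thus $\eonv(S)=\varnothing$ forces $S=\varnothing$ or $S=V$. By the Corollary to Theorem \ref{main}, $\sum_{v\in S}M_v=\mathbf 0$ iff $\eonv(S)=\varnothing$, so the kernel of the $\F_2$‑linear map $\F_2^V\to\F_2^E$, $v\mapsto M_v$, is exactly $\{\mathbf 0,\mathbf 1\}$. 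Its image is $\rs(M)$, so rank–nullity gives $\dim\rs(M)=n-1$. (This sharpens the "$n-1$ or $n$" of Remark \ref{inc} to $n-1$ in the connected case, and it is the step that genuinely uses connectedness.)

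Now for ($\Rightarrow$): if $\rs(M)=\rs(M)^{\perp}$ then in particular $\rs(M)$ is self‑orthogonal, so Remark \ref{inc}(2) gives that every vertex has even degree and $|\e(u)\cap\e(v)|$ is even for all distinct $u,v$; since $|\e(u)\cap\e(u)|=\deg(u)$, these combine into (b). Moreover, a self‑dual code has dimension equal to half its length, so $n-1=m/2$, which is (a). For ($\Leftarrow$): assume (a) and (b). Condition (b) (including the case $u=v$, giving even degrees) yields via Remark \ref{inc}(2) that $MM^{T}=O_n$ over $\F_2$, i.e.\ $\rs(M)\subseteq\rs(M)^{\perp}$. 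Then $\dim\rs(M)^{\perp}=m-\dim\rs(M)=(2n-2)-(n-1)=n-1=\dim\rs(M)$, and a subspace contained in an equidimensional subspace coincides with it, so $\rs(M)=\rs(M)^{\perp}$.

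I do not expect a serious obstacle: the one place requiring care is the dimension computation, where connectedness is essential (without it the incidence code can have dimension $n-c$ for $c$ components, breaking the count), and one must be careful to read condition (b) uniformly — the diagonal case $u=v$ is what absorbs the "even degree" hypothesis so that (b) is precisely the self‑orthogonality condition and nothing more needs to be assumed.
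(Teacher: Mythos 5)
Your proposal is correct and follows essentially the same route as the paper: self-duality is reduced to the self-orthogonality criterion of Remark \ref{inc}(2) together with the dimension count $\dim \rs(M)=n-1=m/2$. The only difference is that where the paper simply cites \cite{dankelmann1} for the fact that the $2$-rank of $M$ is $n-1$ for a connected graph, you derive it internally via the eonv framework (the kernel of $S\mapsto\sum_{v\in S}M_v$ is $\{\mathbf 0,\mathbf 1\}$ because nonempty proper $S$ gives a nonempty edge cut), which makes the argument self-contained but does not change its structure.
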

\begin{proof}
The condition (b) is necessary and sufficient for $RS(M)$ to be self-orthogonal by Remark \ref{inc}.
Since $G$ is a connected graph on $n$ vertices, the $2$-rank of $RS(M)$ is $n-1$ (see \cite{dankelmann1}).
Then $RS(M)$ generates a self-dual code if and only if half of the length $m$ of the code is $n-1$, i.e., $m=2n-2$. 
\end{proof}

The preceding result is for a connected 2-uniform hypergraph. A natural question would be to extend the result for connected k-uniform hypergraphs:

\begin{question}
Let $G$ be a connected $k$-uniform hypergraph with incidence matrix $M$. Find necessary and sufficient conditions for self-duality of the binary linear code $\rs(M)$. 
\end{question}

The answer to the above question will depend on the rank of the incidence matrix $M$ finding which is not easy for $k$-uniform hypergraphs when $k>2$.

\end{document}